 \newtheorem*{theo}{Theorem} %
 \newtheorem*{theo-a}{Theorem$'$} %
 \newtheorem*{prop}{Proposition} %
\theoremstyle{definition}
 \newtheorem{defi}{Definition} %
\newtheoremstyle{defiapp}
{}
{}
{}
{}
{\bfseries}
{.}
{5pt}
{\thmname{#1}\thmnumber{ #2$'$}\thmnote{\hspace{2pt}(#3)}}
\theoremstyle{defiapp}
\newtheorem{defi-a}{Definition}
\DeclareMathOperator{\tr}{tr} %
\newcommand{\bN}{\mathbb{N}} %
\newcommand{\bR}{\mathbb{R}} %
\newcommand{\bC}{\mathbb{C}} %
\newcommand{\incon}{?} %
\newcommand{\incont}{``?''} %
\newcommand{\Prob}{\mathrm{P}} %
\newcommand{\inst}[1]{\mathcal{#1}}
\newcommand{\hil}[1]{\mathcal{#1}}
\newcommand{\hilh}{\hil{H}} %
\newcommand{\bof}[1]{B(\hil{#1})} %
\newcommand{\bofh}{\bof{H}}
\newcommand{\bone}[1]{B^1(\hil{#1})} %
\newcommand{\bonen}{B^1} %
\newcommand{\boneh}{\bone{H}}
\newcommand{\jincon}{J\cup\{\incon\}}
\newcommand{\Omegaa}{{\jincon}}
\begin{document}

\title{Disturbance by optimal discrimination}
\author{Ry\^uitir\^o Kawakubo}
\email{rkawakub@rk.phys.keio.ac.jp}
\affiliation{Department of Physics, Keio University, Yokohama 223-8522, Japan}
\author{Tatsuhiko Koike}
\email{koike@phys.keio.ac.jp}
\affiliation{Department of Physics, Keio University, Yokohama 223-8522, Japan}
\date{\today}

\pacs{03.65.Ta, 03.67.-a}
\begin{abstract}
We discuss the disturbance by measurements which unambiguously
discriminate between given candidate states. We prove that such an
optimal measurement necessarily changes distinguishable states
indistinguishable when the inconclusive outcome is obtained. The
result was previously shown by Chefles~[Phys. Lett. A 239, 339 (1998)]
under restrictions on the class of quantum measurements and on the
definition of optimality. Our theorems remove these restrictions and
are also applicable to infinitely many candidate states. 
Combining with our previous results, one can obtain concrete
mathematical conditions for the resulting states.
The method
may have a wide variety of applications in contexts other than state
discrimination.   
\end{abstract}

\maketitle

\section{Introduction}

Optimal quantum measurements play fundamental roles in 
many subjects in 
quantum foundations and quantum information. 
The subjects include 
error-disturbance relations~\cite{Ozawa04}, 
quantum coding theorems~\cite{Hol98}, 
entanglement distillation~\cite{Bennett96a} 
state estimation~\cite{Helstrom76}, 
state discrimination~\cite{Chefles00}, 
state protection~\cite{Wakamura17a}, etc.
Although it is sometimes difficult to obtain concrete forms of optimal
measurements, 
the characterization of those 
measurements that 
maximize the precision, 
maximize the transmission rate, etc., 
gives fundamental bounds in quantum mechanics and on efficiency of
various information processing. 
In this paper, we discuss a natural and intuitive
principle in optimality: 
The measurement which does ``something'' the best leaves no room for
doing ``something'' afterwards. 
Intuitively, if the room remained, one would be able to 
improve the
original measurement by composing that with a measurement 
which does an amount of ``something'' later. 
This simple reasoning seems to have a wide scope.  
However, it is not necessarily 
trivial to implement the idea as a rigorous statement 
in general or in each subject. 
One must carefully choose the definition of optimality, 
assumptions and conclusions. 
We will demonstrate, 
as a prototype, 
that the principle works well and makes the original problem more transparent
in the context of unambiguous state discrimination, 
where we also 
notice subtleties in the application 
thereof. 
The principle 
generalizes the previous results  
and simplifies the proof, 
which may not be achieved by 
other approaches such as 
extremity (in the mathematical sense) 
of the optimal measurements, 
even if we restrict ourselves to convex evaluation functions.

Unambiguous discrimination
is one of possible frameworks for state discrimination.
There,
one must answer the correct input state among the candidates after performing a quantum measurement. 
One must not
take a state for another, 
though 
one 
can 
answer ``inconclusive'' or ``?.'' 
Unambiguous discrimination between two states 
was 
introduced by Ivanovic~\cite{Ivanovic1987257}
and developed in Refs.~\cite{Dieks1988303,Peres198819,Jaeger199583}.
Chefles~\cite{Chefles1998339} showed 
that finitely many pure states 
are distinguishable 
if and only if they
are linearly independent. 
Feng \textit{et.~al.}~\cite{PhysRevA.70.012308} extended the result to
mixed states. 
There are also interesting examples where infinitely many
candidate states are involved. 
An example is 
the set of coherent states corresponding to 
a lattice in the classical phase space,
which was considered 
by von Neumann~\cite{vonneumann32} 
in the context of simultaneous measurements of 
position and momentum. 
The present authors~\cite{1751-8121-49-26-265201} 
generalized the results above on unambiguous discrimination to 
infinitely many candidate states. 
The application 
to von Neumann's lattice 
led to a natural characterization of Planck's constant 
from the viewpoint of state discrimination.

The 
studies of unambiguous discrimination above 
mainly discuss its possibility and accuracy. 
Not much attention was paid to 
the disturbance caused by
unambiguous discrimination measurements. 
One exception is a part of 
Chefles's work~\cite{Chefles1998339}.
He showed,
under some restrictions explained below, 
that optimal discrimination measurements 
change the input states
to linearly dependent (and thus indistinguishable) ones 
if the inconclusive outcome {\incont} is obtained. 
The claim is interesting 
because 
it concerns about disturbance of an optimal measurement.
However, besides the finiteness of the candidate states, 
the results obtained there were restrictive 
in the following sense. 
First, 
the states to be discriminated were assumed to be pure. 
Second, 
not all measurements allowed by quantum mechanics were considered. 
The measurements were 
restricted to those 
which change pure states to pure states when the outcome is
inconclusive. 
It is quite common, however,  that the output state is mixed even if
 the input is pure. 
Third, 
a particular evaluation function was considered to define 
the optimality. 
Namely, 
existence of a prior probability distribution is assumed and 
the average success probability was chosen. 
Even in the unambiguous discrimination between finite number of states, 
it may be as natural, 
for example, 
to define the optimality by 
maximization of 
the minimum success probability 
taken over the candidate states. 

In this paper,
we show that
optimal unambiguous discrimination measurements 
make distinguishable states indistinguishable
provided that the outcome is the inconclusive one. 
We apply the simple principle 
described at the beginning of this paper and 
derive the result directly from the definition of the optimality, 
not resorting to 
the detailed mathematical properties of the states. 
The results are free from all restrictions mentioned above, 
and 
is valid for mixed states, for general measurements, 
for a wide class of evaluation functions 
(which are not necessarily convex), 
and for 
infinitely many candidate states. 
A careful treatment 
is necessary for 
infinitely many states 
since 
distinguishability naturally splits into slightly different levels, 
distinguishability and 
uniform distinguishability~\cite{1751-8121-49-26-265201}. 
One can obtain 
detailed 
mathematical characterization of the states resulting 
from optimal measurements 
by 
combining the results in 
Ref.~\cite{1751-8121-49-26-265201} 
and those in the present work.

The paper is organized as follows.
In Sec.~\ref{sec-revmes},
we briefly review quantum measurement theory.
We introduce the concept of unambiguous discrimination in Sec.~\ref{sec-uud}.
We present our main result on distinguishability 
in Sec.~\ref{sec-thm} and that on uniform distinguishability 
in Sec.~\ref{sec-thm-uni}. 
The excluded case in the main results, 
which is itself of interest, 
is addressed in Sec.~\ref{sec-perfect}. 
Section~\ref{sec-con} is for conclusion and discussions.

\section{Brief review of quantum measurement theory}
\label{sec-revmes}
In this section, we quickly review quantum measurement theory~\cite{DL70}
to the extent that is necessary for this paper. 
We consider the measurement with countably many outcomes.
Let $\Omega$ denote the set of possible outcomes and 
$\hilh$ denote a system (a separable Hilbert space) to be measured. 
A state is expressed by a density operator $\rho\in \boneh$,
$\rho\geqslant0$, $\tr\rho=1$,
where $\boneh$ denotes the Banach space of trace class operators on $\hilh$.
A measurement on $\hilh$ with the 
outcome set $\Omega$ is mathematically described by
an \textit{instrument} $(\inst{I}_\omega)_{\omega\in\Omega}$.
Each element $\inst{I}_\omega$ of the instrument is
a linear map $\inst{I}_\omega:\boneh\to \boneh$ [that is bounded with
respect to the trace norm on $\boneh$], 
which describes a weighted state change caused by the measurement.
Each $\inst{I}_\omega$ 
sends a state 
$\rho$ 
to an ``unnormalized'' state, 
namely, 
\begin{align}
 \inst{I}_\omega\rho
 =\Prob(\omega|\rho) \rho_\omega, 
 \label{eq-Irho}
\end{align}
where $\Prob(\omega|\rho)$ is 
the probability of obtaining an outcome $\omega\in\Omega$ 
and $\rho_\omega$ is the resulting state when the outcome 
$\omega$ is obtained. 
The equation \eqref{eq-Irho} 
defines $\Prob(\omega|\rho)$ and $\rho_\omega$ uniquely
(unless $\tr[\inst{I}_\omega\rho]=0$): 
  \begin{align}
   \Prob(\omega|\rho)
   &=\tr[\inst{I}_\omega\rho],&
   \rho_\omega
   &=\frac{\inst{I}_\omega\rho}{ \tr{[\inst{I}_\omega\rho]} }.
  \end{align}
  In order to interpret these quantities as probabilities and quantum
  state, respectively, 
  an instrument is assumed to satisfy the following two conditions.

(CP) Completely positivity: 
    $\inst{I}_\omega$ is completely positive for all $\omega\in\Omega$
    i.e., its trivial extentions 
    $\inst{I}_\omega\otimes \mathrm{id}_{\bC^{n\times n}}:
    \bonen(\hilh\otimes \bC^n)\to \bonen(\hilh\otimes \bC^n)$
    are positive maps for all $n\in\bN$.

(TP) Trace preserving property:
     The sum $\sum \inst{I}_\omega$ preserves trace of operators i.e.,
     $\tr[(\sum \inst{I}_\omega)\rho]=\tr \rho$ for all $\rho\in\bone{H}$.

The property CP, especially the positivity,
guarantees positivity of $\rho_\omega$ and $\Prob(\omega|\rho)$.
The property TP 
guarantees the conservation of probability. 
It is known that the instruments with above two properties correspond 
exactly 
to the realizable measurements (e.g.~\cite{doi:10.1063/1.526000}).

  \section{Unambiguous discrimination}
  \label{sec-uud}

  We begin with the precise definition of unambiguous discrimination. 
  
  \begin{defi}\label{defi-uud}
   An \emph{unambiguous discrimination measurement}
   between 
   $(\rho_j)_{j\in J}$ is
   an instrument $(\inst{I}_\omega)_{\omega \in J\cup\{\incon\}}$ that satisfies,
   for $j\ne k\in J$,
   \begin{align}
 \tr[\inst{I}_j\rho_k]
 &=0,&
 \tr[\inst{I}_j\rho_j]>0.
 \label{eq-def-uud}
   \end{align}
   We call $\tr[\inst{I}_j\rho_j]$ \emph{success probabilities}.
   The states $(\rho_j)_{j\in J}$ are said to be
   \emph{distinguishable}
   when they admit at least one unambiguous discrimination measurement
   between them.
  \end{defi}

  Let $(\rho_j)_{j\in J}$ be distinguishable
  candidate states. 
  Then 
  $(\rho_j)_{j\in J}$ 
  admit an unambiguous discrimination measurement
  $(\inst{I}_\omega)_{\omega \in J\cup\{\incon\}}$.
  When the true state is $\rho_j$,
  one obtains the outcome $j$ or {\incont}
  with probabilities $\tr[\inst{I}_j\rho_j]$ or $\tr[\inst{I}_\incon\rho_k]$, respectively, 
  and does not obtain any other outcome,
  namely, 
  $\tr[\inst{I}_j\rho_j]+\tr[\inst{I}_\incon\rho_j]=1$.
  When the outcome is $j$,
  we can decide the true state is $\rho_j$ \textit{with certainty}.
   
 In the rest of this section,
  we would like to discuss the 
  ways to quantify how good an unambiguous discrimination measurement is. 
  In the presence of a prior probability density $(p_j)_{j\in J}$, 
  which was assumed by 
  Dieks~\cite{Dieks1988303} and Chefles~\cite{Chefles1998339}, 
  it is reasonable to evaluate 
  an unambiguous discrimination measurement  $(\inst{I}_\omega)_{\omega\in J\cup\{\incon\}}$
  by the average success probability
  \begin{align}
   f_{\text{av}}
   :=\sum_{j\in J} p_jq_j,
   \label{eq-fav}
  \end{align}
  where $q_j=\tr[\inst{I}_j\rho_j]$ are success probabilities. 

  However, this is not the only way.
  Even in the case of finite $J$, 
  there are important examples in which 
  $f$ is not of the form 
  \eqref{eq-fav}. 
  The minimum success probability 
  \begin{align}
   f_{\text{min}}
   :=\min\{\,q_j \mid j\in J\,\} 
   \label{eq-fmmin}
  \end{align}
  is such an example. 
  Roughly speaking, 
  $1/f_{\text{min}}$ 
  trials are enough to determine the
  true state. 
  This is the operational meaning of the minimum success probability
  $f_{\text{min}}$. 
  Generalizing these two examples,
  we define the class of evaluation functions for unambiguous
  discrimination. 

\begin{defi}\label{defi-evf}
 We call a function $f:(0,1]^J\to\bR$ an evaluation function if 
 \begin{align}
  x_j>y_j \ \text{for all}\  j\in J\implies f(x_j)>f(y_j)
 \end{align}
holds for all $(x_j)_{j\in J},\, (y_j)_{j\in J} \in (0,1]^J$.
We say an unambiguous discrimination measurement
$(\inst{I}_\omega)_{\omega \in J\cup\{\incon\}}$ 
between the states $(\rho_j)_{j\in J}$ is 
better if the value $f(\tr[\inst{I}_j\rho_j])$ is larger, 
and optimal if no other measurement exceeds the value.
  \end{defi}

  The class of ``evaluation functions'' defined here is so large
  that, 
  when $J$ is finite, 
  one could hardly imagine any functions 
  that suit the term 
  and do not belong to the defined class. 
  For example, 
  the class contains 
  $f_{\text{av}}$ and 
  $f_{\text{min}}$. 
  An evaluation function need not be linear nor convex.
  When $J$ is infinite, 
  however, 
  $f_{\text{inf}}$ [see \eqref{eq-finf}], 
  which is a natural generalization of 
  $f_{\text{min}}$,  
  is excluded from the class. 
  Such functions are more suitably discussed in the context of  
  uniform distinguishability 
  (see Sec.~\ref{sec-thm-uni}).

\section{Result on distinguishability}
\label{sec-thm}

  Now, we can state the first main result. 
 \begin{theo}
  Optimal discriminations make
  distinguishable states indistinguishable
  when the discrimination fails (gives \incont). 
  Namely, assume that 
   the measurement 
   $(\inst{I}_{\omega})_{\omega\in J\cup \{\incon\}}$ 
   achieves an optimal 
   unambiguous discrimination between the states 
   $(\rho_j)_{j\in J}$ 
  and that the condition
  \begin{align}
   &\tr[\inst{I}_j\rho_j]<1 \quad\text{ for all $j\in J$}
   \label{theo-eq-assumption}
  \end{align}
  holds.
   Then the resulting states 
   under the condition that 
   the outcome is {\incont}, defined by 
   \begin{align}
 \left(\frac{\inst{I}_\incon\rho_j}{\tr[\inst{I}_\incon\rho_j]} \right)_{j\in J},
     \label{theo-eq}
   \end{align}
   are not distinguishable.
   Here,
   the optimality is defined
   by an arbitrary evaluation function 
   in Definiton~\ref{defi-evf}.
 \end{theo}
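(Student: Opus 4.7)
The natural strategy is proof by contradiction using the composition-of-measurements principle advertised in the introduction: if the post-inconclusive states were still distinguishable, one could apply a second unambiguous discrimination on top of the inconclusive branch and strictly boost every success probability, contradicting optimality.

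Concretely, suppose for contradiction that the states $\sigma_j := \inst{I}_\incon\rho_j/\tr[\inst{I}_\incon\rho_j]$ (well-defined because condition \eqref{theo-eq-assumption} together with the trace-preserving property forces $\tr[\inst{I}_\incon\rho_j]=1-\tr[\inst{I}_j\rho_j]>0$) are distinguishable. Then by Definition~\ref{defi-uud} there exists an unambiguous discrimination measurement $(\inst{J}_\omega)_{\omega\in \jincon}$ between $(\sigma_j)_{j\in J}$, with $\tr[\inst{J}_j\sigma_k]=0$ for $j\ne k$ and $\tr[\inst{J}_j\sigma_j]>0$. I would then construct a new instrument
\begin{align}
 \inst{I}'_j := \inst{I}_j + \inst{J}_j\circ \inst{I}_\incon \quad (j\in J), \qquad
 \inst{I}'_\incon := \inst{J}_\incon\circ \inst{I}_\incon,
\end{align}
which describes the sequential procedure ``measure $\inst{I}$; if the outcome is $\incon$, apply $\inst{J}$ and report its outcome.''

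The next step is a routine verification that $(\inst{I}'_\omega)_{\omega\in\jincon}$ satisfies the CP and TP axioms (composition and sum of CP maps are CP; telescoping the trace sum collapses to that of $\inst{I}$) and satisfies the unambiguity condition of Definition~\ref{defi-uud}: for $j\ne k$, $\tr[\inst{I}'_j\rho_k]=\tr[\inst{I}_j\rho_k]+\tr[\inst{I}_\incon\rho_k]\,\tr[\inst{J}_j\sigma_k]=0$. For the success probabilities one computes
\begin{align}
 \tr[\inst{I}'_j\rho_j]
 = \tr[\inst{I}_j\rho_j] + \tr[\inst{I}_\incon\rho_j]\,\tr[\inst{J}_j\sigma_j]
 > \tr[\inst{I}_j\rho_j]
\end{align}
for every $j\in J$, using both the positivity of $\tr[\inst{I}_\incon\rho_j]$ guaranteed by the hypothesis and the strict positivity built into the definition of $\inst{J}$. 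By the monotonicity axiom in Definition~\ref{defi-evf}, $f(\tr[\inst{I}'_j\rho_j])>f(\tr[\inst{I}_j\rho_j])$, contradicting optimality of $\inst{I}$.

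The only genuine delicacy I expect, and the point most worth writing out carefully, is the bookkeeping that the hypothesis \eqref{theo-eq-assumption} together with the unambiguous-discrimination constraints $\tr[\inst{I}_j\rho_k]=0$ for $j\ne k$ and the TP property really forces $\tr[\inst{I}_\incon\rho_j]>0$ for every $j\in J$; without this the normalized states $\sigma_j$ would not even be defined, and the strict inequality in the success-probability comparison would collapse. Everything else (CP/TP of the composition, well-definedness of $\inst{J}_j\circ\inst{I}_\incon$ on $\boneh$, and the fact that $J$ may be infinite) is handled uniformly by the pointwise-in-$j$ nature of the argument: the evaluation-function monotonicity requires only a strict increase coordinate by coordinate, so no uniformity over $J$ is ever needed.
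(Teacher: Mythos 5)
Your proof is correct and follows essentially the same route as the paper: compose the given measurement with a second unambiguous discrimination acting on the inconclusive branch, check that the composite is an instrument satisfying the unambiguity condition, and show every success probability strictly increases (using $\tr[\inst{I}_\incon\rho_j]=1-\tr[\inst{I}_j\rho_j]>0$), contradicting optimality via the coordinatewise monotonicity of the evaluation function. The only immaterial difference is bookkeeping on the conclusive branch: the paper lets the second instrument act there too and discards its outcome, defining $\inst{J}_j=\bigl(\sum_{\omega'}\inst{I}'_{\omega'}\bigr)\inst{I}_j+\inst{I}'_j\inst{I}_\incon$, whereas you leave $\inst{I}_j$ untouched; both choices yield the same outcome probabilities on the candidate states and both TP verifications are one-line telescopings.
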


  Note that the condition $\tr[\inst{I}_j\rho_j]<1$ is equivalent to 
  $\tr[\inst{I}_?\rho_j]>0$ and this ensure the well-definedness of 
  the resulting states \eqref{theo-eq}.
  We will discuss the case 
  that 
  this condition fails in Sec.~\ref{sec-perfect}.

  We 
  explain the idea of the proof first 
  and then give the formal 
  one.
  The simple but important idea is to prove the contrapositive,
  namely, 
  if the states \eqref{theo-eq} 
  are distinguishable then the discrimination measurement 
  $(\inst{I}_\omega)_{\omega\in J\cup\{\incon\}}$ is not optimal.
  We thus construct 
  a discrimination measurement 
  that is better than the original one, 
  $(\inst{I}_\omega)_{\omega\in J\cup\{\incon\}}$, 
  assuming that \eqref{theo-eq} are distinguishable.
  The measurement consisting of the following two steps does the
  task. 
  \begin{enumerate}
   \item
        Perform the original discrimination 
        $(\inst{I}_{\omega})_{\omega\in
        J\cup\{\incon\}}$. 
        If the outcome of this measurement is $j\in J$,
        then decide the true state is $\rho_j$
        regardless of the second step.
        If the outcome is {\incont},
        then defer the decision and proceed to the next step.
   \item
        Perform the discrimination of states \eqref{theo-eq},
        whose existence is guaranteed by the assumption.
        If the outcome of this measurement is $j\in J$,
        decide the true state is $\rho_j$.
        Otherwise, give up on the decision and answer {\incont}.
  \end{enumerate}
  We show below
  that this combined discrimination measurement
  truly improves the original one 
  $(\inst{I}_{\nu})_{\nu\in J\cup\{\incon\}}$.
  \begin{proof}[Proof of the Theorem]
   Note that
   \begin{align}
 \tr[\inst{I}_?\rho_j]
 &= 1-\tr[\inst{I}_j\rho_j]
 >0
   \end{align}
   by the assumption of the Theorem.
   
   We will prove the contrapositive. 
   Let us assume \eqref{theo-eq} admits an unambiguous discrimination
   measurement $(\inst{I}'_\omega)_{\omega\in \Omegaa}$. 
   Because $(\inst{I}_{\omega})_{\omega\in\Omegaa}$ and $(\inst{I'}_{\omega})_{\omega\in\Omegaa}$
   discriminate between $(\rho_j)_{j\in J}$ and between \eqref{theo-eq}, respectively,
   one obtains, by Definition~\ref{defi-uud},
   \begin{align}
 &\tr\left[\inst{I}_j\rho_k\right]
 =0,&
 & \tr\left[\inst{I}_j\rho_j\right]>0,
 \label{theo-pr-eq1}\\
 &\tr\left[\inst{I}_j
 \frac{\inst{I}_\incon\rho_k}{\tr[\inst{I}_\incon\rho_k]}
 \right]
 =0,&
 &
 \tr\left[\inst{I}_j
 \frac{\inst{I}_\incon\rho_j}{\tr[\inst{I}_\incon\rho_j]}
 \right]
 >0.
 \label{theo-pr-eq2}
   \end{align}
   for $j, k\in J$ with $j\ne k$.
   
   Let us define an instrument 
   $(\inst{J}_{\omega})_{\omega\in \Omegaa}$ 
   by
  \begin{align}
   \inst{J}_j
   &:=
   \bigg(\sum_{\omega'\in\Omegaa}\inst{I}'_{\omega'}\bigg) \inst{I}_j+ \inst{I}'_{j} \inst{I}_?, &
 j\in J,\\ 
   \inst{J}_\incon&:=
 \inst{I}'_\incon\inst{I}_\incon, 
   \label{theo-pr-eq3}
  \end{align}
  We will prove that
  the instrument $(\inst{J}_{\omega})_{\omega\in\Omegaa}$ unambiguously discriminate
  states $(\rho_j)_{j\in J}$ better than $(\inst{I}_\omega)_{\omega\in\Omegaa}$
   in the rest of this proof.

  First, it is readily seen that 
  $ (\inst{J}_{\omega})_{\omega\in \Omegaa}$ is an instrument
  since
  $\sum_{\omega\in \Omegaa} \inst{J}_\omega=\sum_{\omega',\omega\in \Omegaa}
  \inst{I}'_{\omega'}\inst{I}_{\omega} $.

   Second,
   we calculate the probabilities $\tr\left[ \inst{J}_{j}\rho_k \right]$
   for all $j,k\in J$:
  \begin{align}
   &\tr\left[ \inst{J}_{j}\rho_k \right]\notag\\
   &=
   \tr\left[
   \left(  \sum \inst{I}'_{\omega} \right) \inst{I}_j \rho_k
   \right]
   +
   \tr\left[
   \inst{I}'_{j}\inst{I}_\incon \rho_k
   \right]
   \notag\\
   &=
   \tr\left[ \inst{I}_j \rho_k \right]
   +
   (\tr[ \inst{I}_\incon \rho_k ])
   \tr\left[
   \inst{I}'_{j} \frac{\inst{I}_\incon \rho_k}{ \tr[\inst{I}_\incon \rho_k] }
   \right]
   \notag\\
   &=
   \delta_{j,k}  \left(
   \tr\left[ \inst{I}_k \rho_k \right]
   +
   (\tr[ \inst{I}_\incon \rho_k ])
   \tr\left[
   \inst{I}'_{k} \frac{\inst{I}_\incon \rho_k}{ \tr[\inst{I}_\incon \rho_k] }
   \right]
   \right),
   \label{theo-rp-succpr}
  \end{align}
 where the first equality is by 
 the definition of 
 $ (\inst{J}_{\omega})_{\omega\in \Omegaa}$, 
 the second follows from the TP property of $\sum \inst{I}'_{\omega}$,
 and the third is by \eqref{theo-pr-eq1} and \eqref{theo-pr-eq2}.

 Finally,  we show that $(\inst{J}_{\omega})_{\omega\in\Omegaa}$
  unambiguous discriminates between $(\rho_j)_{j\in J}$ better than $(\inst{I}_\omega)_{\omega\in \Omegaa}$.
  We see that each 
  $\tr[\inst{J}_j\rho_j]$ is 
  strictly larger than 
  $\tr[\inst{I}_j\rho_j]$: 
   \begin{align}
 &\tr[\inst{J}_j\rho_j] - \tr[\inst{I}_j\rho_j]
 =
 \tr[ \inst{I}_\incon \rho_j ]
 \tr\left[
 \inst{I}'_{j} \frac{\inst{I}_\incon \rho_j}{ \tr[\inst{I}_\incon \rho_j] }
 \right]
 >0,
 \label{theo-eq-posi}
   \end{align}
 where the equality is by \eqref{theo-rp-succpr},
 and the inequality is 
 by \eqref{theo-pr-eq1} and \eqref{theo-pr-eq2}.
   %
   Eqs.~\eqref{theo-rp-succpr} and \eqref{theo-eq-posi} prove, in particular, 
   that $(\inst{J}_\omega)_{\omega\in \Omegaa}$
   unambiguously discriminate $(\rho_j)_{j\in J}$.
   Let $f$ be any evaluation function described in the Definition~\ref{defi-evf}.
   Then, by \eqref{theo-eq-posi},
   we have $f(\inst{J}_j\rho_j)> f(\inst{I}_j\rho_j)$.
   In other words,
   the instrument 
   $(\inst{J}_{\omega})_{\omega\in\Omegaa}$ 
   discriminates between 
   $(\rho_j)_{j\in J}$ 
   better than $(\inst{I}_\omega)_{\omega\in \Omegaa}$. 
   This completes the proof.
  \end{proof}

  We will discuss 
  the assumption \eqref{theo-eq-assumption} in the Theorem 
  in Sec.~\ref{sec-perfect}.

\section{Result on uniform distinguishability}
\label{sec-thm-uni}
  As the number of states becomes infinite, the concept of distinguishability
  is naturally divided into two: 
  ``distinguishability'' and ``uniform distinguishability''~\cite{1751-8121-49-26-265201}. 
  We discussed the former 
  in the preceding sections. 
  We consider the latter in this section.

  We recall 
  that the class of evaluation functions
  in the Definition in Sec.~\ref{sec-uud}
  becomes slightly restricted
  when 
  the index set $J$ is not finite.
  Although $f_{\text{av}}$ 
  is 
  included in the class, 
  an important 
  evaluation function 
  \begin{align}
   f_{\text{inf}}(x_j)
   :=\inf\{\, x_j\mid j\in J\,\}, 
   \label{eq-finf}
  \end{align}
  which generalizes 
  $f_{\text{min}}$, 
  is excluded. 
  The function 
  $f_{\text{inf}}$
  has a definite operational meaning 
  similar to 
  $f_{\text{min}}$. 
  Hence, it is 
  a natural demand to include 
  such an evaluation function. 
  It can be done 
  by introducing the uniform distinguishability.  
  We provide the uniform version of
  Definitions~\ref{defi-uud} and \ref{defi-evf}, 
  and the Theorem. 
  
  \begin{defi-a}\label{defi-uud-unif}
   A \emph{uniform unambiguous discrimination measurement}
   between 
   $(\rho_j)_{j\in J}$ is
   an instrument $(\inst{I}_\omega)_{\omega \in J\cup\{\incon\}}$ that satisfies,
   for $j\ne k\in J$,
   \begin{align}
    \tr[\inst{I}_j\rho_k]
    &=0,&
    \inf\{\,\tr[\inst{I}_j\rho_j] \mid j\in J\,\}>0.
    \label{eq-def-uud}
   \end{align}
   The states $(\rho_j)_{j\in J}$ are said to be
   \emph{uniformly distinguishable}
   when they admit at least one 
   uniform 
   unambiguous discrimination measurement
   between them.
  \end{defi-a}

  Distinguishability is a weaker condition than
  uniform distinguishability. 
  Consider, for example, 
  the states $(\rho_j)_{j\in\bN}$ that 
  are merely distinguishable 
  with success probabilities 
  $q_j=\tr[\inst{I}_j\rho_j]=1/j$.
  In this case, 
  one cannot predict how many trials suffices 
  to determine the true state 
  before performing the discrimination
  since $1/f_\text{inf}(q_j)=1/0=\infty$.
  Uniform distinguishability 
  cures this problem 
  and 
  provides a natural framework for infinitely many states.

  \begin{defi-a}\label{defi-evf-unif}
  We call a function $f:(0,1]^J\to\bR$ 
  an evaluation function \emph{for uniform discrimination} if 
   \begin{align}
    \inf\{\, x_j>y_j  \mid j\in J\,\} >0\implies f(x_j)-f(y_j)>0,
    \label{def-evfunc-inf}
   \end{align}
   holds, where $(x_j)_{j\in J},\, (y_j)_{j\in J} \in (0,1]^J$.
  \end{defi-a}

 Note that the function $f_\text{inf}$ is an evaluation function for
 uniform discrimination as well as $f_\text{av}$. 
 Evaluation functions for uniform discrimination 
 comprise a 
 larger  class than mere evaluation functions do.

 \begin{theo-a}
  Optimal uniform discrimination measurements make
  uniformly distinguishable states not uniformly distinguishable 
  when the discrimination fails (gives \incont). 
  Namely, assume that 
   the measurement 
   $(\inst{I}_{\omega})_{\omega\in J\cup \{\incon\}}$ 
   achieves an optimal uniform
   unambiguous discrimination between the states 
   $(\rho_j)_{j\in J}$ 
  and that the condition
  \begin{align}
   &\sup\{\tr[\inst{I}_j\rho_j]\}<1 
   \label{theo-eq-assumption-a}
  \end{align}
  holds.
   Then the resulting states 
   under the condition that 
   the outcome is {\incont}, defined by 
   \begin{align}
 \left(\frac{\inst{I}_\incon\rho_j}{\tr[\inst{I}_\incon\rho_j]} \right)_{j\in J},
   \end{align}
   are not uniformly distinguishable.
   Here,
   the optimality is defined
   by an arbitrary evaluation function 
   for uniform discrimination 
   in Definition~\ref{defi-evf}\,$'$.
 \end{theo-a}

 The definitions and theorem with prime symbols
 are equivalent to those without prime in Sec.~\ref{sec-thm} when $J$
 is finite. 
 When $J$ becomes countably infinite,
 ``uniform distinguishability'' becomes a stronger condition than mere
 ``distinguishability'' and evaluation functions for uniform
 discrimination form a larger class than that of mere evaluation 
 functions.  
 The proof of the Theorem$'$ can be given in a way similar to that of
 the Theorem  
 and is omitted [the essential point is
 to replace the inequalities ``$\cdots>0$'' with ``$\inf\{\,\cdots\mid
 j\in J\,\}>0$'' 
 in 
 Eqs.~\eqref{theo-pr-eq1}, 
 \eqref{theo-pr-eq2} and 
 \eqref{theo-eq-posi}].

 We note 
 that an assumption \eqref{theo-eq-assumption-a}, 
 which is stronger  
 than \eqref{theo-eq-assumption}  
 in the previous Theorem, 
 is necessary in the Theorem$'$. 
 In fact, 
 when 
 $\sup q_j=1$ and $q_j<1$, 
 the original uniform discrimination 
 (with success probabilities $q_j$) 
 is not 
 necessarily improved by a 
 subsequent uniform discrimination measurement 
 (with $q_j'$). 
 The reason is 
 because the improvements in success probabilities are given by
 $(1-q_j)q_j'$ [see \eqref{theo-eq-posi}].

  \section{Separation of perfectly distinguishable states}
  \label{sec-perfect}

  In the Theorem, we assumed that 
  optimal discrimination measurement 
  satisfies $\tr[\inst{I}_j\rho_j]<1$.
  The excluded case was $\tr[\inst{I}_\incon\rho_j]=0$,
  or equivalently, 
  $\tr[\inst{I}_j\rho_j]=1$.
  We discuss such cases in this section. 
  \begin{defi}
    Let $(\rho_j)_{j\in J}$ be states and 
   $K$ be a subset of $J$. 
    The 
   states $(\rho_j)_{j\in K}$
    are said to be 
    perfectly distinguishable 
   if there exists 
   an 
   unambiguous discrimination measurement 
   $(\inst{I}_{\omega})_{\omega \in\jincon}$ 
   between 
   $(\rho_j)_{j\in J}$ 
   such that 
    \begin{align}
     &\tr[\inst{I}_k\rho_k]=1 
    \end{align} 
    holds for all $k\in K$.
 \end{defi}
  \begin{prop}
   Assume 
   that 
   the 
   states $(\rho_j)_{j\in J}$ are 
   distinguishable and 
   that $(\rho_j)_{j\in K},\, K\subset J$, are
   perfectly distinguishable.
   Then there exists a (two-outcome projection) measurement 
   $(\inst{L},\inst{L}')$ such that,
   \begin{align}
    &\inst{L}\rho_k=\rho_k, &
    &\inst{L}'\rho_k=0,&
    &k\in K,
    \label{eq-prop-1}\\
    \intertext{and} 
    &\inst{L}\rho_\ell=0,&
    &\inst{L}'\rho_\ell=\rho_\ell,&
    &\ell\in J\setminus K
    \label{eq-prop-2}
   \end{align} 
   holds.
  \end{prop}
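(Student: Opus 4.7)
The plan is to construct a single orthogonal projection $Q$ on $\hilh$ whose range contains the supports of all $\rho_k$ with $k\in K$ and whose orthogonal complement contains the supports of all $\rho_\ell$ with $\ell\in J\setminus K$. Once such a $Q$ is in hand, the projective instrument defined by $\inst{L}(\rho):=Q\rho Q$ and $\inst{L}'(\rho):=(I-Q)\rho(I-Q)$ automatically fulfils Eqs.~\eqref{eq-prop-1} and \eqref{eq-prop-2}, since $Q\rho_k=\rho_k$ forces $(I-Q)\rho_k=0$ and symmetrically for $\rho_\ell$.

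To build $Q$ I would first invoke the hypotheses: since $(\rho_j)_{j\in J}$ are distinguishable and $(\rho_j)_{j\in K}$ are perfectly distinguishable, there exists an unambiguous discrimination instrument $(\inst{I}_\omega)_{\omega\in\jincon}$ between $(\rho_j)_{j\in J}$ with $\tr[\inst{I}_k\rho_k]=1$ for every $k\in K$. Let $(E_\omega)$ denote the associated POVM, so $\tr[E_\omega\rho]=\tr[\inst{I}_\omega\rho]$ and $0\leq E_\omega\leq I$. Using the standard fact that $\tr[AB]=0$ for positive operators $A,B$ forces $AB=0$, the trace conditions yield the operator identities $E_k\rho_k=\rho_k$ for $k\in K$ (from $\tr[(I-E_k)\rho_k]=0$) and $E_j\rho_k=0$ for $j\neq k$ in $J$ (from the unambiguity).

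Next I would set $A:=\sum_{k\in K}E_k$, which converges in the strong operator topology to a positive operator bounded by $I$. The identities above give $A\rho_k=\rho_k$ for every $k\in K$ and $A\rho_\ell=0$ for every $\ell\in J\setminus K$; by self-adjointness of $A$ this means $A$ acts as the identity on the support of each such $\rho_k$ and as zero on the support of each such $\rho_\ell$. Let $Q$ be the spectral projection of $A$ corresponding to the eigenvalue $1$. Then the supports of the $\rho_k$ ($k\in K$) lie in the range of $Q$, while the supports of the $\rho_\ell$ ($\ell\in J\setminus K$) lie in $\ker A$, which is orthogonal to the range of $Q$. Consequently $Q\rho_k=\rho_kQ=\rho_k$ and $Q\rho_\ell=\rho_\ell Q=0$, from which Eqs.~\eqref{eq-prop-1}--\eqref{eq-prop-2} follow at once for the instrument defined in the first paragraph.

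The only nontrivial point is the treatment of a possibly infinite $K$: strong-operator convergence of $\sum_{k\in K}E_k$ and the interchange with the action on each $\rho_j$ need to be justified from the monotonicity of the partial sums and the uniform bound $0\leq\sum_{k\in F}E_k\leq I$ over finite $F\subset K$, together with the trace-class nature of the $\rho_j$. Apart from this mild technicality the argument is purely algebraic and spectral, and does not use any further structural information about the states.
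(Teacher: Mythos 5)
Your proof is correct, and it is built on the same basic object as the paper's: the operator $A=\sum_{k\in K}E_k$ obtained by summing the POVM elements of the perfect discriminator over $K$, from which a projection is extracted and used to define the projective instrument $\rho\mapsto Q\rho Q$, $\rho\mapsto(1-Q)\rho(1-Q)$. Where you differ is in the choice of projection and in how the supports are located. The paper writes $A=LL^*$ for some $L\in\bofh$ and takes $P$ to be the projection onto the closure of $L\hilh$, i.e.\ the support projection of $A$; it then derives $P\rho_k=\rho_k$ from the trace chain $1=\tr[\inst{I}_k\rho_k]\leqslant\tr[LL^*\rho_k]=\tr[LL^*(P\rho_kP)]\leqslant\tr[P\rho_kP]$, and $P\rho_\ell=0$ from $\tr[L^*\rho_\ell L]=0$ together with $\overline{L\hilh}=P\hilh$. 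You instead first upgrade the trace conditions to the operator identities $E_k\rho_k=\rho_k$ and $E_j\rho_k=0$ ($j\neq k$) via the positivity lemma, sum them to get $A\rho_k=\rho_k$ and $A\rho_\ell=0$, and take $Q$ to be the spectral projection of $A$ at eigenvalue $1$. Your $Q$ and the paper's $P$ are in general different operators ($Q\leqslant P$, with equality not guaranteed), but both contain the supports of the $\rho_k$, $k\in K$, and annihilate the $\rho_\ell$, $\ell\in J\setminus K$, so either choice yields \eqref{eq-prop-1} and \eqref{eq-prop-2}. Your route is somewhat more algebraic and avoids introducing the auxiliary operator $L$, at the cost of invoking the spectral theorem for $A$; the paper's argument stays at the level of trace inequalities. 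Your remark on infinite $K$ is the right justification: the partial sums of $\sum_{k\in K}E_k$ are increasing and bounded by $1$, hence converge strongly, and applying them to the fixed trace-class states lets you pass the identities to the limit, which is also what the paper's construction of $LL^*$ tacitly relies on.
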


\begin{proof} 
 Assume $(\rho_k)_{k\in K}$ are perfectly distinguishable by an unambiguous discrimination measurement $(\inst{I}_\omega)_{\omega\in J\cup\{\incon\}}$. 
    Let $L\in\bofh$ be an operator such that, for all $\rho\in\bone{H}$,
    \begin{align}
     \sum_{k\in K} \tr[\inst{I}_k\rho]
     =\tr[LL^*\rho],
    \end{align}
    i.e., 
    $LL^*$ is the sum of so-called positive-operator valued measure
    (POVM) elements for outcomes in $K$. 
    Then $LL^*\leqslant1$. 
    Let $P$ be the projection onto the (norm) closure of 
    $L\hilh=\{\, L\xi \mid \xi \in\hilh \,\}$. 
    Then $PL=L$. 
    Define the instrument $(\inst{L},\inst{L}')$ by
    \begin{align}
     \inst{L}\rho&=P\rho P,&
     \inst{L}'\rho&=(1-P)\rho(1-P),
    \end{align}
    where $\rho\in\boneh$.
    We will prove this instrument satisfies 
    the conditions \eqref{eq-prop-1} and \eqref{eq-prop-2} in the Proposition.

 First, we prove \eqref{eq-prop-1}.
 Fix $k\in K$.
 Because $(\inst{I}_{\omega})_{\omega \in\jincon}$ is perfect on $K$, we have
 \begin{align}
  1
  &=\tr[\inst{I}_k\rho_k] 
  \notag \\
  &
  \leqslant \sum_{k'\in K}\tr[\inst{I}_{k'}\rho_k] 
  =\tr[LL^*\rho_k] 
  =\tr[LL^*(P\rho_k P)] \notag \\
  &\qquad\qquad\qquad\qquad\qquad\qquad\quad
  \leqslant \tr[1(P\rho_k P)].
 \end{align}
 Therefore $\tr[P\rho_kP]=1$.
 Thus, by the cycric property of trace, $\tr[(1-P)\rho_k(1-P)]=0$.
 Then one has $\rho_k^{1/2}(1-P)=0$,
 which proves \eqref{eq-prop-1}.
 
 Second, we prove \eqref{eq-prop-2}.
 Fix $\ell\in J\setminus K$.
 By the assumption that $(\inst{I}_\omega)_{\omega\in \jincon}$
 unambiguously discriminates between the states,
 we have $\tr[\inst{I}_k \rho_\ell]=0$ for all $k\in K$.
 Then $0=\sum_{k\in K} \tr[\inst{I}_k\rho_\ell]=\tr[L^*\rho_\ell L]$ 
 and $\rho_\ell^{1/2}L=0$.
 Since $P$ is the projection onto the closure of $L\hilh$,
 we have $\rho_\ell^{1/2}P=0$.
 Thus $(1-P)\rho_\ell(1-P)=\rho_\ell$,
 which proves \eqref{eq-prop-2}.
\end{proof}

By the measurement described in the Proposition,
we can see whether the true state $\rho_j$ 
is in 
$(\rho_k)_{k\in K}$ or $(\rho_\ell)_{\ell\in J\setminus K}$, 
without disturbing 
the set that contains $\rho_j$. 
   
The Theorem 
in Sec.~\ref{sec-thm} 
is not applicable when an optimal instrument perfectly
discriminates between some of the states.   
In such a case, 
one can remove all perfectly distinguishable states
beforehand by the 
Proposition above 
and 
then apply the Theorem.
Therefore, 
we can assume \eqref{theo-eq-assumption} in the Theorem 
without loss of generality.
However, 
we cannot always assume
\eqref{theo-eq-assumption-a} in the
Theorem$'$ 
physically. 
When $q_j:=\tr[\inst{I}_j\rho_j]<1$ and $\sup q_j=1$, 
the states after an optimal uniform discrimination measurement with 
the inconclusive outcome may be uniformly distinguishable.

  \section{Conclusion and discussions}
  \label{sec-con}

We have shown that optimal unambiguous discrimination makes
distinguishable states indistinguishable under the condition that the
inconclusive outcome {\incont} is obtained. The results extend the
previously known ones to infinitely many candidates and output states,
which can be pure or mixed, to all quantum mechanically possible
measurements, and to virtually all evaluation functions that define
optimality. Our proof was based on a simple principle. The best
measurements leave no room to carry out the task further. If the
resulting states are distinguishable, we can achieve more accurate
discrimination by discriminating the resulting states. This made the
proof almost obvious and, at the same time, removed restrictions in
the previous work~\cite{Chefles1998339}. We would like to emphasize
that our proofs of the Theorems do not depend on the criteria of the
distinguishability
such as linear independence. The Theorems is a direct
consequence of the definition of optimality.  

We have also discussed the uniform discrimination, which becomes slightly
stronger than the mere distinguishability when the number of candidate
state becomes infinite. We showed the Theorem$'$ for the uniform
distinguishability. The conclusion of the Theorem$'$ becomes 
slightly weaker; however, the class of evaluation functions is
enlarged, which includes a natural measure $f_{\text{inf}}$. The
difference between the Theorem and Theorem$'$ describe the subtlety 
of 
the handling of infinite many states. 

Besides the main results, we have discussed the case that some
candidate states are perfectly distinguishable. We showed that such
candidate state can be separated by the two-outcome measurement
without disturbing the true state. This 
reflects the nature of unambiguous discrimination.  

If one is interested in more detailed descriptions of the property of
the resulting states, he/she can make use of the results in our
previous work. 
It was 
proved that countably many pure states
are distinguishable if and only if they are minimal and that they are
uniformly distinguishable if and only if they are Riesz-Fisher (both
of the mathematical properties are generalizations of the linear
independence to the case of infinitely many vectors). We also derived
the condition for the countably many general (possibly mixed) states
to be distinguishable (see also \cite{PhysRevA.70.012308} for the case
of finitely many states). On the other hand, the condition for
countably many general states to be \textit{uniformly} distinguishable
can be stated based on our previous work in principle; however, the
condition so derived seems to be complicated so that it is not
practical enough. It may be an interesting problem to simplify the
condition for uniform distinguishability, which 
helps understand
the resulting states or the disturbance of the optimal
discrimination. 

Finally, we would like to make a general comment on the simple
principle which we have used to prove the theorems (that the optimal
measurements leave no room to carry out the task further). The idea is
itself very simple and obvious so that everyone understands it
readily. However, it is not trivial in what context this idea really
works well and how to apply the idea to each context. Note that the
fact that the idea works in the context of unambiguous discrimination
under an appropriate setting, as we have demonstrated in this paper,
is itself not trivial. For example, without the inconclusive outcome,
it might be impossible to make use of the idea in 
a similar manner. 
The simple idea seems to be applicable to a wide variety of
subjects 
and also allows a
unified discussion. It would be an interesting work to find other
subjects 
in which the idea can 
draw useful conclusions.

\end{document}